\def\bbbr{{\rm I\!R}}
\def\Skip{\par\bigskip\nobreak}
\newtheorem{proposition}{Proposition}
\def\Skip{\par\bigskip\nobreak}
\def\dj{d\kern-0.4em\char"16\kern-0.1em}
\def\Dj{\mbox{\raise0.3ex\hbox{-}\kern-0.4em D}}
\DeclareMathAlphabet{\mathpzc}{OT1}{pzc}{m}{it}
\begin{document}
\pagestyle{plain}

\makeatletter
\@addtoreset{equation}{section}
\makeatother
\renewcommand{\thesection}{\arabic{section}}
\renewcommand{\theequation}{\thesection.\arabic{equation}}
\renewcommand{\thefootnote}{\arabic{footnote}}

\setcounter{page}{1}
\setcounter{footnote}{0}

\begin{titlepage}
\begin{flushright}
\small ~~
\end{flushright}

\bigskip

\begin{center}

\vskip 0cm

{\LARGE \bf   { Asymmetric latent semantic indexing for gene expression experiments visualization}} \\[6mm]

\vskip 0.5cm

{\bf Javier Gonz\'alez$^1$,\, Alberto Mu\~noz$^2$  \,and\, Gabriel Martos$^2$}\\

\vskip 25pt

{\em
              $^1$Sheffield Institute for Translational Neuroscience, \\Department of Computer Science, University of Sheffield.\\
               Glossop Road S10 2HQ, Sheffield, UK.\\
}
{\small {\tt \ { j.h.gonzalez@sheffield.ac.uk }}}
\Skip

{\em $^2$Department of Statistics, University Carlos III of Madrid \\ Spain.  C/ Madrid, 126 - 28903, Getafe (Madrid), Spain.}\\
{\small {\tt \ {alberto.munoz@uc3m.es, gabriel.martos@uc3m.es }}}

\vskip 0.8cm

\end{center}

\vskip 1cm

\begin{center}

{\bf ABSTRACT}\\[3ex]

\begin{minipage}{13cm}
\small

We propose a new method to visualize gene expression experiments inspired by the latent semantic indexing, technique originally proposed in the textual analysis context. By using the correspondence word-gene document-experiment, we define an asymmetric similarity measure of association for genes that accounts for potential hierarchies in the data, the key to obtain meaningful gene mappings. We use the polar decomposition to obtain the sources of asymmetry of the similarity matrix, which are later combined with previous knowledge. Genetic classes of genes are identified by means of a mixture model applied in the genes latent space.  We describe the steps of the procedure and we show its utility in the Human Cancer dataset.  

\vspace{0.5cm}
\emph{Keywords}: Latent semantic indexing, Asymmetric similarities, Gene expression data, Textual data analysis.

\end{minipage}

\end{center}

\vfill

\end{titlepage}

\section{Introducction}
A gene expression dataset consists of a matrix $\textbf{Y} \in \bbbr^{n \times p}$, with each row representing an experiment and each column representing a gene. Typically, the number of genes is  several thousand, whereas the number of experiments or samples is in the order of tens. In Figure \ref{figure:heatmap}.A we show the heat map of the differentially expressed genes of the Human cancer dataset, which originally consists of  6830 genes measured in 64 experiments corresponding to 14 different types of Cancer patients available in \cite{bib:Hastie2009}. To provide answers to questions like which genes are more similar in terms of their expression profiles or which genes are involved in certain types of cancer is the key to extracting useful biological knowledge in experiments of this type. 

A common strategy to find interesting patterns in the data is to define some measure of similarity or dissimilarity for the genes \citep{journals/bmcbi/PrinessMB07,citeulike:5767111}, which is later combined with a cluster algorithm \citep{Kohonen:2001:SM:558021,journals/bioinformatics/Gat-ViksSS03}. The Euclidean distance, the Pearson correlation coefficient or the Mutual Information, are the most common measures. Although useful in many scenarios, such measures are unable to capture some complex features that have been discovered to be present in the way the genes interact with each other. Particularly, an interesting case is the hierarchy among the genes, an universal pattern that has been extensively observed in the literature, mainly in the context of networks analysis \citep{Reka:RevModPhys2002, wuchty03architecture,citeulike:400265}.

Inspired by the latent semantic indexing (LSI) \citep{deerwester-88,Deerwester90indexingby}, the technique originally proposed for textual data analysis, in this paper we propose a new visualization technique to unravel the structure of gene expression datasets.  Although the idea of using textual data analysis techniques in the biological context has been explored in the literature in some recent works \citep{Bicego:2010:EMC:1774088.1774415,Ng:2004:WFC:976520.976537,Caldas2009}, these approaches use the Latent Dirichlet Allocation (LDA) \citep{Blei03latentdirichlet} as a fundamental model, which  provides neither a Euclidean representation of the genes useful for visualization nor takes into account the hierarchical relationship among the genes. In this work we address both problems by means of a new asymmetric latent semantic indexing approach (aLSI), following the existing literature in asymmetric similarities based methods \citep{okada1987nonmetric,okada1990generalization,chino1978graphical,chino1990generalized,munoz2003support}. Therefore, the contributions of this paper are twofold: 

\begin{itemize}
\item[(i)] A proof-of-concept analysis  to  illustrate the importance of using  asymmetric gene similarities in gene expression experiments.
\item[(ii)] A new asymmetric latent semantic indexing (aLSI) approach to produce meaningful gene mappings, which can be used in combination with previous biological knowledge such as gene-ontologies, pathways, protein-protein interaction networks, etc.
\end{itemize}

Our approach is inspired by the work of  \citep{AlbertoMu&ntilde;oz2012} in which an asymmetric version of the LSI is already  defined in the textual data context. In this work the authors propose a partition of the data in several hierarchical levels, which  aim at accounting for the hierarchical relationships between the words of the database. Within each level, a Gram Mercer kernel matrix is obtained by means of the triangular decomposition, which captures the remaining asymmetries not removed by the partition in the different layers. Finally, a Euclidean representation of the words is produced within each level and these are connected using a measure of inclusion. 

In this work, we propose an alternative aLSI which does not require a partition of the dataset in hierarchical levels. This represent itself and advantage with respect to the work of  \citep{AlbertoMu&ntilde;oz2012} since the choice of the number of layers its already avoided.  Nevertheless, the key aspect of our approach is to replace the triangular decomposition of the similarity matrix by the polar decomposition, which produces two complementary gene representations. This allows us to produce a global mapping that does not require any partition of the data while the information provided by the asymmetries in the gene similarity matrix is still taken into account. 

\begin{figure}[t!]
 \begin{center}
\includegraphics[height=12.5cm,]{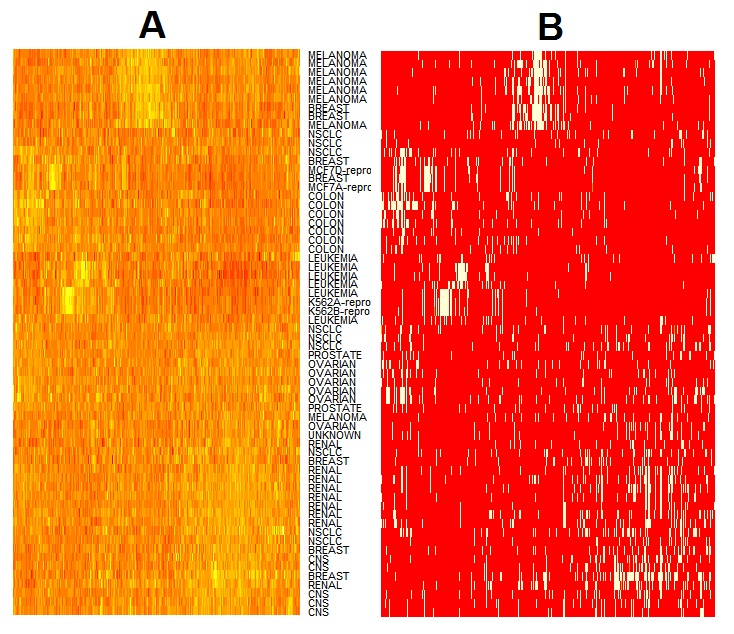}
\caption{A) Heat-map of the micro-array of the Human Cancer dataset. Originally, there are 6830 genes (columns) whose expression is measured in 64 patients (rows) with 14 different types of Cancer. Colour intensity represents the level expression of the genes. B) Heap map of the Human Cancer dataset in which only the expressed genes are highlighted (in white). Each row of this matrix can be interpreted as a document whose words are those genes which are differentially expressed.  }\label{figure:heatmap}
  \end{center}
\end{figure}

This paper is organized as follows. In Section \ref{sec2} we detail the connection between asymmetric similarities and hierarchies in genetic experiments and we illustrate this phenomenon in the Human Cancer data set. In Section \ref{sec3} we propose a new asymmetric latent semantic indexing (aLSI) procedure. In Section \ref{sec4} we illustrate the utility of the proposed approach in a real data experiment and in Section \ref{sec5} we conclude with a discussion of this work.

%Amir Ben-Dor, Ron Shamir and Zohar Yakhini. Clustering gene expression patterns. Journal of
%Computational Biology, 6(3/4):281–297, 1999.

\begin{figure}[t!]
 \begin{center}
\includegraphics[height=10cm]{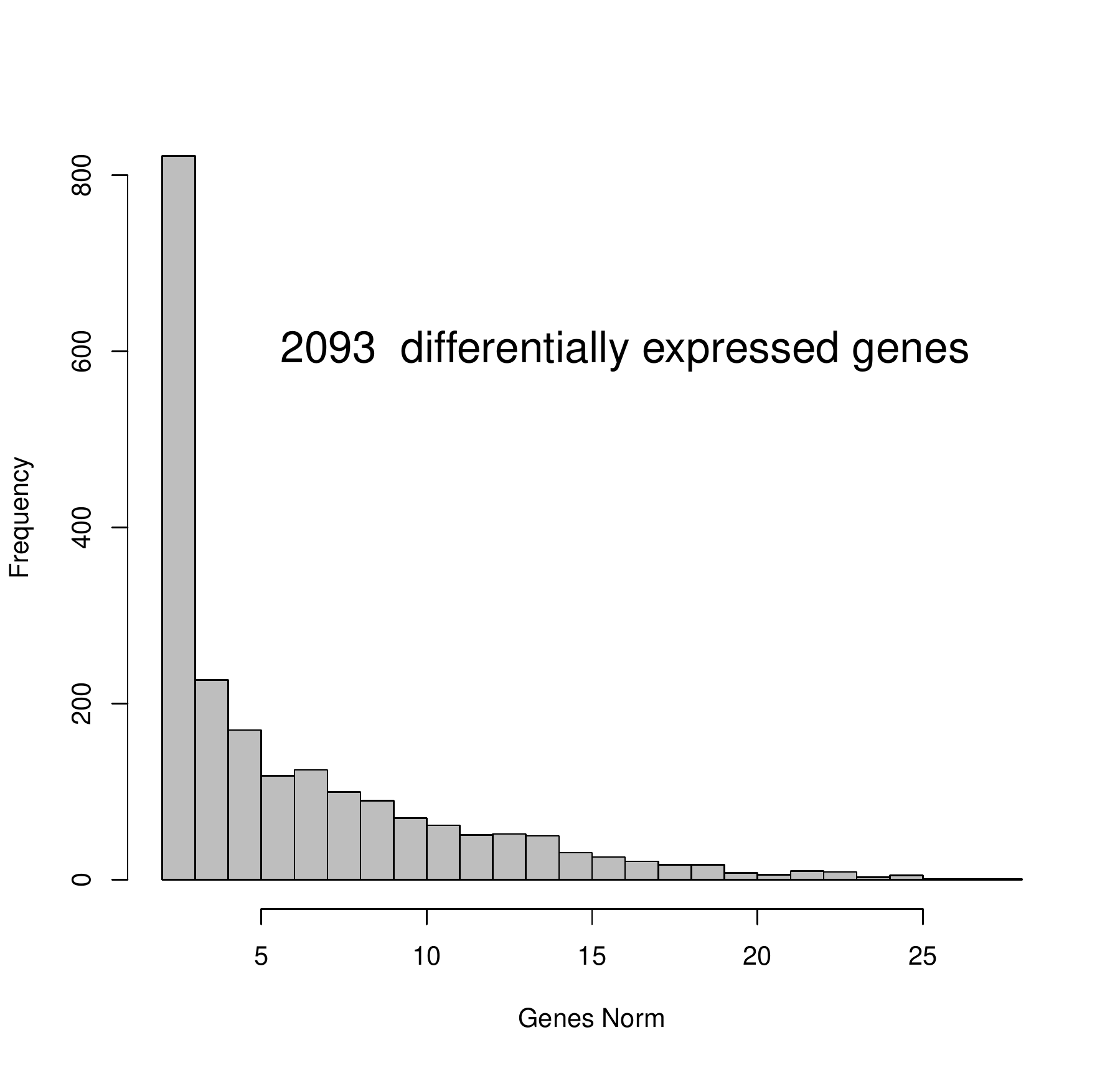}
\caption{Evidence of the Zipf's law in gene expression experiments: Histogram of the norms of the 2093 differentially expressed genes of the Human cancer data set. }\label{figure:zipf}
  \end{center}
\end{figure}

\section{Hierarchy/asymmetry in gene expression experiments}\label{sec2}

In this section we illustrate the idea of ``gene hierarchy". To this end, we will use the above mentioned Human Cancer data set. Consider the matrix $\textbf{X}$  such that $\textbf{x}_{kj}=1$ if the gene $j$ is significantly expressed in the experiment $k$ and $\textbf{x}_{kj}=0$ otherwise (see Section \ref{sec41} for details). This gene-experiment matrix is analogous to the  term-document matrix, common in textual analysis \citep{AlbertoMu&ntilde;oz2012}. In this field, it is common to work with a matrix $\textbf{X}$ where $\textbf{x}_{kj}=1$ if the term $j$ appears in document $k$ and $\textbf{x}_{kj}=0$ otherwise. By using the correspondence genes/words and experiments/documents we can apply techniques from the text mining literature to analyse gene expression datasets. Therefore, in the sequel we will use indistinctly the terms genes-words and experiments-documents.

For now, consider a textual data set and let $|\textbf{x}_i|$ be  the number of documents indexed by term $ith$ and  $|\textbf{x}_i \wedge \textbf{x}_j|$ the number of documents indexed by both $i$ and $j$ terms. Consider the following asymmetric similarity measure ($s_{ij} \neq s_{ji}$)

\begin{equation}\label{eq:similarity}
s_{ij} = \frac{|\textbf{x}_i \wedge \textbf{x}_j|}{|\textbf{x}_i|} = \frac{\sum_k  min (x_{ik},x_{jk}) }{\sum_k x_{ik} },
\end{equation}
which has been previously studied in a number
of works related to Information Retrieval \citep{citeulike:461218, bib:diego10}. It turns out that expression (\ref{eq:similarity}) can be interpreted as the degree in which the topic represented by the
term $i$ is a subset of the topic represented by the term $j$. As a measure of
inclusion it was originally proposed by \cite{bib:Kosko1991} in the context of fuzzy set theory. Regarding its interpretation in a textual data example, consider, for instance, a collection of documents containing the term ``statistics". In this case a more specific term like
``non parametric" will occur just in a subset. The relation between ``non parametric"
and "statistics" is strongly asymmetric, in the sense that the concept represented
by the word ``non parametric" is a subset of the concept represented by the word
``statistics" but not conversely. In the biological context, where $s_{ij}$ represents the similarity between two genes, expression (\ref{eq:similarity}) represents the degree in which a gene $i$ is a subclass or it is hierarchically dependent of a gene $j$. 

The matrix $\textbf{X}$ contains information about both, the terms and the documents of the database. In the sequel we will use $\textbf{t}_j$ to refer the terms (columns of $\textbf{X}$) and $\textbf{d}_i$ (rows of \textbf{X}) to refer the documents. Using the definition of similarity in expression (\ref{eq:similarity}) the skew-symmetric term associated to each pair of terms $\textbf{t}_i$, $\textbf{t}_j$ can be written as 
$$\frac{1}{2}(s_{ij}-s_{ji}) = \frac{1}{2}\left(\frac{|\textbf{t}_i\wedge \textbf{t}_j|}{|\textbf{t}_i|}-\frac{|\textbf{t}_i\wedge \textbf{t}_j|}{|\textbf{t}_j|} \right)= \frac{|\textbf{t}_i\wedge \textbf{t}_j|}{2|\textbf{t}_i||\textbf{t}_j|}(|\textbf{t}_j|-|\textbf{t}_i|)\propto (|\textbf{t}_j|-|\textbf{t}_i|).$$
Therefore, a large difference between $s_{ij}$ and $s_{ji}$ is directly related to a large difference between the norms of the words given by
$|\textbf{t}_i|$ and $|\textbf{t}_j|$. Thus, the distribution of term norms in case of asymmetry/hierarchy is clearly far from being uniform. 

In Figure \ref{figure:zipf} we show the histogram of the norms of the differentially expressed genes of the Human Cancer data set. The figure shows that a few number of genes have very large norms while a large number of genes have small norms. This behaviour, which  can be modelled by means of the Zipf's law \citep{journals/ijon/Martin-MerinoM05}, is an evidence of asymmetric/hierarchical associations. Genes with large norms correspond to `biologically relevant' genes involved in many processes (or high level concepts), whereas genes with small norm represent rarely expressed genes (or very specific concepts). The hierarchy induced on the gene set by the inclusion measure $s_{ij}$ is directly related with its asymmetric nature, and caused by the strongly asymmetric gene frequency distribution.

%To transform/combine asymmetric text mining matrices into symmetric positive-definite matrices. From these matrices we can easily obtain Mercer kernels that can be used to solve document (or term) classification tasks. In particular we will afford here the task of semantic latent classes extraction. To this aim we start from the document by term $n \times p$ matrix $\textbf{X}$ and perform some clustering process to induce labels on the terms of the document base. This will allow to apply the proposed combination method to obtain an appropriate term kernel matrix. 

%Using kernel factor analysis \cite{Charles2000} or cluster analysis we will obtain a number of latent semantic classes in which the terms group that will be identified with latent semantic classes. Using a Gaussian mixture model on the terms represented in the space of latent classes we will assign explicit conditional probabilities $P(c_i|\textbf{t}_j)$, where $c_i$ indicates the latent semantic class $i$ and $\textbf{t}_j$ term $j$.

\section{Asymmetric latent semantic indexing}\label{sec3}

The latent semantic indexing (LSI) \citep{deerwester-88} is a useful technique in natural language processing to analyse relationships between a set of documents and the terms they contain. The idea is to produce a set of concepts or latent semantic classes to summarize the content of the dataset. In this section we propose an asymmetric latent semantic indexing that uses as input the similarity in eq. (\ref{eq:similarity}). In a biological context, we will talk about `latent genetic classes' to refer to groups of genes that summarize the main content of the data. Next, we introduce the LSI to later generalize it to its asymmetric version.

\subsection{Latent semantic indexing}

Consider the $n\times p$ document by term $\textbf{X}$ matrix whose entries contain the word counts per document. The  matrix $\textbf{X}^T \textbf{X}$ contains the correlations among terms $\textbf{t}_j$ and $\textbf{t}_k$ (measured as  $\textbf{t}_j^T \textbf{t}_k$) and $\textbf{X} \textbf{X}^T$ contains the correlations among documents measured as $\textbf{d}_i \textbf{d}_s^T$. Using the singular value decomposition (SVD) for $\textbf{X}$ we obtain the unique decomposition $\textbf{X} = \textbf{U}_x \Sigma_x \textbf{V}_x^T$, where $\textbf{U}_x$ and $\textbf{V}_x$ are orthogonal matrices and $\textbf{D}_x$ is diagonal and contains the singular values of $\textbf{X}$. It is straightforward to see that $\textbf{X} \textbf{X}^T = \textbf{U}_x \Sigma_x \Sigma_x^T \textbf{U}_x^T$ and

\begin{equation}
\textbf{X}^T \textbf{X} = \textbf{V}_x \Sigma_x^T \Sigma_x \textbf{V}_x^T.
\end{equation}
Therefore, the immersion of the term $\textbf{t}_j$ into the semantic class space is given by 
\begin{equation}
\textbf{t}_j=\Sigma^{-1}_x \textbf{U}_x \textbf{t}_j.
\end{equation}
On the other hand, the immersion of document $\textbf{d}_i$ in the same latent space is given by $\textbf{d}_i = \Sigma^{-1}_x \textbf{V}_x \textbf{d}_i$.

\subsection{Polar decomposition of an asymmetric similarity matrix}\label{sec32}
Consider the $p \times p$ asymmetric similarity matrix $(\textbf{S})_{ij} = s_{ij}$  in eq. (\ref{eq:similarity}). By means of the SVD we obtain that $\textbf{S} = \textbf{U}_s \Sigma_s \textbf{V}_s^T$, which lead to the polar decomposition of $\textbf{S}$ \citep{bib:Horn19991, bib:Higham1986}.  Define $\textbf{L}=\textbf{U}_s\textbf{V}^T_s$. Then $\textbf{S} = \textbf{K}_1\textbf{L} = \textbf{L}\textbf{K}_2,$ where 
\begin{equation}\label{k1}
\textbf{K}_1 = \textbf{U}_s \Sigma_s \textbf{U}_s^T 
\end{equation}
\begin{equation}\label{k2}
\textbf{K}_2 = \textbf{V}_s\Sigma_s \textbf{V}_s^T. 
\end{equation}
Note that $\|\textbf{S}\|_F = \|\textbf{K}_1\|_F = \|\textbf{K}_2\|_F$, where $\|\cdot\|_F$ is the Frobenius norm. Also remark that $\textbf{S}$ does not directly decompose in any combination of $\textbf{K}_1$ and $ \textbf{K}_2$ but these matrices can be understood as the two sources of asymmetry of $\textbf{S}$.  Geometrically speaking, since $\textbf{S}\textbf{V} =   \textbf{U} \Sigma$, it is straightforward to check that  $\textbf{S}\textbf{v}_j =   \sigma_j\textbf{u}_j$ where $\textbf{v}_j$ and $\textbf{u}_j$ are the columns of $\textbf{U}$ and $\textbf{V}$ respectively. Therefore the eigenvectors $\{\textbf{v}_1,\dots, \textbf{v}_p \}$ of $\textbf{K}_2$ are mapped under the asymmetric matrix $\textbf{S}$ onto the scaled orthogonal coordinate system $\{\sigma_1 \textbf{u}_1,\dots,\sigma_n \textbf{u}_p \}$. Equivalently, one can interpret the symmetric effect with respect to the eigenvectors $\{\textbf{u}_1,\dots, \textbf{u}_p \}$. The asymmetry in $\textbf{S}$ is therefore reflected in the angle between each pair of left and right eigenvectors of $\textbf{S}$. Therefore $span\{\textbf{v}_1,\dots,\textbf{v}_p\}$ and $span\{\textbf{u}_1,\dots,\textbf{u}_p\}$ produce different but complementary representations of the genes.  Note that if $\textbf{S}$ is a symmetric matrix $\textbf{K}_1 =  \textbf{K}_2$ and therefore both representations are equivalent since $\textbf{u}_j = \textbf{v}_j$ for all $j=1,\dots,p$. The polar decomposition has been previously used in the analysis of asymmetric relationships in \citep{Gower1977, Gower1998}.

\subsection{Merging the sources of asymmetry}\label{sec:comb}

The matrices $\textbf{K}_1$ and $\textbf{K}_2$ are symmetric and positive semi-definite. Therefore, they are kernel matrices \citep{bib:aroszajn50,bib:wahba90} that admit the decompositions $\textbf{K}_1 = \Phi_1\Phi_1^T$ and $\textbf{K}_2 = \Phi_2\Phi_2^T$ where $\Phi_1= \textbf{U}\Sigma^{1/2}$ and $\Phi_2= \textbf{V}\Sigma^{1/2}$ respectively. The two matrices induce two different distances for the terms, which are the consequence of $\textbf{S}$ of being asymmetric. Note that if $\textbf{S}$ is symmetric then $\Phi_1=\Phi_2$. To find a unifying distance (or kernel) using $\textbf{K}_1$ and $\textbf{K}_2$  is therefore the key to obtain an appropriate Euclidean representation for the terms. In this sense,  suppose that we are able to find suitable transformations $\phi_i$, $i=1, 2$, such that the induced distance on the terms, given by $d_{\phi_i}(\textbf{t}_j,\textbf{t}_k)^2 = \|\phi_i(\textbf{t}_j) - \phi_i(\textbf{t}_k)\|^2$, corresponds to the one induced by each kernel matrix $\textbf{K}_i$. This implies that $d_{\phi_i}(\textbf{t}_j,\textbf{t}_k)^2 = (\textbf{K}_i)_{jj} + (\textbf{K}_i)_{kk} - 2 (\textbf{K}_i)_{jk}$, where $j,k =1,\dots,p$ and $(\textbf{K})_{jk}=\phi_i(\textbf{t}_j)^T\phi_i(\textbf{t}_k)$.

Following  \citep{RePEc:eee:jmvana:v:120:y:2013:i:c:p:120-134}, it is possible to prove that for each matrix $\textbf{K}_i$ there exists a symmetry, continuous and positive-definite kernel function $K_i: T\times T \rightarrow \bbbr$, where $T$ is a compact set, such that $K_i(\textbf{t}_j,\textbf{t}_k) = \phi(\textbf{t}_j)^T\phi(\textbf{t}_k)$, $\textbf{t}_j$, $\textbf{t}_k\in T$ is the implicit kernel corresponding to $d_{\phi_i}(\textbf{t}_j,\textbf{t}_k)$. See \citep{RePEc:eee:jmvana:v:120:y:2013:i:c:p:120-134} for conditions on the existence of such $k_i$. Each kernel function $k_i$ has a unique associated Reproducing kernel Hilbert space (RKHS), whose feature map \footnote{We say that $\phi$ is the feature map of a kernel $k:T\times T \rightarrow \bbbr$ if $k(\textbf{t},\textbf{t}') = \langle \phi(\textbf{t}),\phi(\textbf{t}') \rangle$ holds for any $\textbf{t}$, $\textbf{t}' \in T$ where  $\langle \cdot,\cdot \rangle$ represents the usual $l^2$ product.} or canonical basis, is given by $\phi_i$  \citep{bib:aroszajn50,bib:wahba90}. 

The operation of adding the kernels $k_1$ and $k_2$ gives rise to a new RKHS whose feature map is the union of $\phi_1$ and $\phi_2$. In particular, let $k_1$ and $k_2$ two positive semi-definite kernel functions and let $\phi_1$ and $\phi_2$ their underlying feature maps. Then $k=\lambda_1k_1+\lambda_2k_2$, with $\lambda_1,\,\lambda_2\geq 0$, is a positive semi-definite kernel with $\phi = [\sqrt{\lambda_1}\phi_1,\sqrt{\lambda_2}\phi_2]$ as a valid feature map. This property, , which can be easily generalized to multiple kernels, implies that the sum of the kernel functions $k_1$ and $k_2$ can be understood as the sum of the associated RKHSs. Therefore, to use the operation $\textbf{K}=\lambda_1\textbf{K}_1+\lambda_2\textbf{K}_2$, with $\lambda_1=\lambda_2=1/2$, has the property of defining a new kernel matrix whose induced distances take  equally into account the representation of the terms using both kernels, or equivalently in our case,  the representations of the genes given by $\Phi_1= \textbf{U}\Sigma^{1/2}$ and $\Phi_2= \textbf{V}\Sigma^{1/2}$. That is, the right and left eigenvalues of $\textbf{S}$ have the same weight on the final distance induced by $\textbf{K}$. 

An alternative fusion scheme can be found in \citep{AlbertoMu&ntilde;oz2012}. However, in this work the main step to deal with asymmetry is to split the dataset into layers of words with similar norm. Here, we are able to deal with asymmetry in a single step by  means of the polar decomposition of $\textbf{S}$. In the former approach,  hierarchical clusters of words are provided, but a unique representation of the terms is not available as we provide here. This represents a problem for the generalization and applicability of the work in \citep{AlbertoMu&ntilde;oz2012} that is solved in our proposal: since the distance among words of different layers is not available, this technique cannot be used in problem like classification in which a unique distance for the words is needed.

\subsection{Generalizing the combination approach}

The goal of this section is  to generalize the previous idea described in the previous section in order to propose an approach to combine $\textbf{K}_1$, $\textbf{K}_2$ and a third matrix $\textbf{W}$ with prior information about the problem. Such a matrix might be derived from an initial labeling of the terms or the experiments. In the genetic context, this is a natural idea since prior knowledge about the relationships among the genes is common \citep{wang2013incorporating}. Some examples are gene-ontologies, pathways, protein-protein interaction networks, etc. Note that by imposing $\textbf{K}$ to be positive semi-definite a Euclidean representation of the terms is always available by mean of some matrix decomposition $\textbf{K}=\Phi\Phi^T$ \citep{Schoenberg35remarksto,Householder1938}.

We combine $\textbf{K}_1$, $\textbf{K}_2$ and $\textbf{W}$ to obtain a fusion similarity matrix $\textbf{K}$ by maximizing
\begin{equation}\label{eq:optSpenalized}
\textsf{G}_{\tau} (\textbf{K}) = \left \|\textbf{K} - \gamma_1\mathcal{F}(\textbf{K}_1,\textbf{K}_2)\right \|^2_F + \tau \left \|\textbf{K} - \gamma_2\textbf{W} \right \|_F^2, 
\end{equation}
where $\tau> 0$ is the regularization parameter, $\gamma_1,\gamma_2>0$ are scale parameters and $\mathcal{F}(\textbf{K}_1,\textbf{K}_2)$ is a functional combination of the matrices $\textbf{K}_1$ and $\textbf{K}_2$ whose output is a symmetric positive semi-definite matrix.  The underlying idea in eq. (\ref{eq:makm}) is to merge both sources of asymmetry and to keep a balance with the prior knowledge given by $\textbf{W}$. The fusion scheme proposed in eq. (\ref{eq:makm}) can be derived using a regularization theory approach, similar to the one used in the derivation of SVM classifiers \citep{bib:diego10}. The solution to the problem stated  in eq. (\ref{eq:optSpenalized}) is given in the following proposition,

\begin{proposition}
The minimizer of $\textsf{G}_{\tau} (\textbf{S})$ for any $\mathcal{F}$ and $\tau>0$ and $\gamma_1=\gamma_2 = \tau+1$ is given by

\begin{equation}\label{eq:makm}
\textbf{K}= \mathcal{F}(\textbf{K}_1,\textbf{K}_2) +\tau \textbf{W}.
\end{equation} 
\end{proposition}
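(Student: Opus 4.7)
The plan is to treat $\textsf{G}_\tau$ as a strictly convex quadratic functional in the entries of $\textbf{K}$ and locate its unique minimizer via a first-order condition. Concretely, I would view each of the two terms as an ordinary squared Frobenius distance between $\textbf{K}$ and a fixed matrix: the target of the first term is $\gamma_1 \mathcal{F}(\textbf{K}_1,\textbf{K}_2)$ and the target of the second is $\gamma_2 \textbf{W}$. Since $\tau > 0$, the objective is a sum of two strictly convex quadratics and is itself strictly convex in $\textbf{K}$, so any stationary point is automatically the global minimizer.

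The key step is differentiating $\textsf{G}_\tau$ with respect to $\textbf{K}$. Using the identity $\nabla_{\textbf{K}}\|\textbf{K}-\textbf{A}\|_F^2 = 2(\textbf{K}-\textbf{A})$ (immediate from expanding the Frobenius inner product or from entrywise differentiation of $\sum_{ij}(K_{ij}-A_{ij})^2$), the stationarity equation becomes
\begin{equation*}
2\bigl(\textbf{K} - \gamma_1 \mathcal{F}(\textbf{K}_1,\textbf{K}_2)\bigr) + 2\tau \bigl(\textbf{K} - \gamma_2 \textbf{W}\bigr) = \textbf{0},
\end{equation*}
which solves uniquely to
\begin{equation*}
\textbf{K} = \frac{\gamma_1}{1+\tau}\mathcal{F}(\textbf{K}_1,\textbf{K}_2) + \frac{\tau \gamma_2}{1+\tau}\textbf{W}.
\end{equation*}

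To obtain the stated form I would then substitute the normalization $\gamma_1 = \gamma_2 = \tau + 1$, which cancels the $1+\tau$ denominators and yields exactly $\textbf{K} = \mathcal{F}(\textbf{K}_1,\textbf{K}_2) + \tau \textbf{W}$, as claimed. To certify this is indeed the minimizer (not merely a critical point), I would note that the Hessian of $\textsf{G}_\tau$ with respect to $\mathrm{vec}(\textbf{K})$ equals $2(1+\tau)\textbf{I}$, which is strictly positive definite whenever $\tau > 0$, ruling out saddle points or maxima.

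I do not foresee a genuine obstacle: the derivation is a textbook least-squares calculation, and no structural assumption on $\mathcal{F}$ or $\textbf{W}$ beyond their being fixed matrices of compatible dimension is needed. The only point worth flagging is that the proposition does not assert that the minimizer is symmetric or positive semi-definite --- those properties are inherited from $\mathcal{F}(\textbf{K}_1,\textbf{K}_2)$ and $\textbf{W}$ when both are, since the minimizer is then a non-negative linear combination of symmetric positive semi-definite matrices. Mentioning this briefly at the end would be useful, since the subsequent sections rely on $\textbf{K}$ admitting a decomposition $\textbf{K}=\Phi\Phi^T$.
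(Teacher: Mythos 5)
Your proof is correct and follows essentially the same route as the paper's: set the (entrywise or vectorized) gradient of the quadratic objective to zero, solve the resulting linear system, substitute $\gamma_1=\gamma_2=\tau+1$, and confirm the critical point is a minimum via the Hessian $2(1+\tau)\textbf{I}$. Your closing remark that symmetry and positive semi-definiteness of $\textbf{K}$ are inherited from $\mathcal{F}(\textbf{K}_1,\textbf{K}_2)$ and $\textbf{W}$ is a useful addition not made explicit in the paper's proof.
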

Of course, different $\mathcal{F}$ lead to different combinations of $\textbf{K}_1$ and $\textbf{K}_2$. In this work, and based on the ideas described in the previous section, we consider the arithmetic mean of the matrices $\mathcal{F}(\textbf{K}_1,\textbf{K}_2) = (\textbf{K}_1+\textbf{K}_2)/2$  but we refer to  \citep{bib:diego10,Munoz:2007:JDK:1782914.1782982,conf/ciarp/MunozG08,conf/ciarp/MunozGD06} for further kernel fusion procedures.

\subsection{Probabilistic latent semantic indexing with asymmetric similarities}\label{sec:plsi}
In this section we make use of eq. (\ref{eq:optSpenalized}), computed from the asymmetric similarity matrix $\textbf{S}$, to redefine the LSI. We use the ideas from \citep{DBLP:conf/cikm/ParkR09, AlbertoMu&ntilde;oz2012} with the special novelty that the term representation is given by the distances induced by our particular choice of $\textbf{K}$. 

Following the ideas described in Section \ref{sec:comb}, let $\phi$ be a transformation of the terms  such that the induced distance on the terms, given by $d_{\phi}(\textbf{t}_j,\textbf{t}_k)^2 = \|\phi(\textbf{t}_j) - \phi(\textbf{t}_k)\|^2$, corresponds to the one induced by the kernel matrix $\textbf{K}$.  Consider the matrix $\Phi$, such that $(\Phi)_{ij} = \phi_i(\textbf{t}_j)$. The rows of $\Phi$, say the $\phi(\textbf{t}_j)$, represent the transformation of $\textbf{t}_j$ to the latent class/feature space.  Following the LSI scheme, we apply the SVD to the transformed term $p\times m$ matrix $\Phi = \textbf{U} \Sigma \textbf{V}^T$
and we obtain that $\textbf{K} = \Phi \Phi^T = \textbf{U} \Sigma \Sigma^T \textbf{U}^T = (\textbf{U} \Lambda^{\frac{1}{2}})(\textbf{U} \Lambda^{\frac{1}{2}})^T$, where $\Lambda = \Sigma \Sigma^T = \Sigma^2$ is the diagonal matrix of eigenvalues of $\textbf{K}$ and $\Sigma$ is the diagonal matrix of singular values of $\Phi$. In this context the matrix $\textbf{K}$ plays the role of $\textbf{X}^T \textbf{X}$ in the original LSI formulation. Then the immersion of $\phi(\textbf{t}_i)$ is given by 
$$\phi^s(\textbf{t}_i) = \Sigma^{-1} \textbf{U}^T \phi(\textbf{t}_i) = \Lambda^{-\frac{1}{2}} \textbf{U}^T \phi(\textbf{t}_i).$$
Therefore, by replacing $\textbf{X}^T\textbf{X}$ by $\textbf{K}$ we `kernelize' the LSI by using the original asymmetric similarity matrix $\textbf{S}$: we replace the original linear mapping of the LSA by the non linear one give by $\phi$. %See \citep{DBLP:conf/cikm/ParkR09} for a similar latent semantic kernel approach. 

The semantic classes in the latent space can be identified with clusters of transformed term data.  In order to estimate such semantic classes $c_1,\dots,c_q$ we apply a Gaussian mixture model-based clustering \citep{Fraley00model-basedclustering}. That is, for each term we obtain an estimation of the probability of membership, $p(c_i|\textbf{t}_j)$, to each one of the latent semantic classes $c_i$. We assume that each cluster is generated by a Gaussian multivariate distribution $f_k(\textbf{t}) = \mathcal{N}_k(\mu_k ,\Sigma_k )$, where $\mu_k$ and $\Sigma_k$ are the mean vector and covariance matrix respectively. The final mixture density is therefore given by
$$f(\textbf{t}) = \sum_{k=1}^q \alpha_k  \mathcal{N}_k(\textbf{t}) = \sum_{k=1}^q \alpha  \mathcal{N}_k( \mu_k ,\Sigma_k ),$$
where each $\alpha_k$ represents the prior probability or weight of the component $k$. 
The main advantage of this approach is that we can obtain a density estimator for each cluster and a `soft' classification rule is available: each
term may belong to more than one semantic class via the use of conditional probabilities $p(c_i|\textbf{t}_j)$.

%These features are shared with models such as Probabilistic Semantic Indexing \citep{Hofmann:1999:PLS:312624.312649} and Latent Dirichlet Allocation \citep{Blei03latentdirichlet}, but our model provides in addition explicit Euclidean representations for the terms. 

%We estimate the components of the mixture using the Expectation-Maximization (EM) algorithm \citep{Dempster77maximumlikelihood}. Computational details can be obtained in \cite{Fraley00model-basedclustering}.

\subsection{Algorithm}
In this section we summarize the steps to apply the proposed asymmetric latent semantic indexing to a data set. As we detailed in Section \ref{sec2}, there exist strong similarities between textual and gene expression data, therefore our proposal can be used in both scenarios. See Table \ref{al1} for details.

% latex table generated in R 3.0.2 by xtable 1.7-1 package
% Thu Feb 20 11:25:03 2014
\begin{table}[t!]
\centering
\begin{tabular}{lll}
\hline
\textbf{Input:} & & Genes-by-experiments matrix $\textbf{X}$. \\
\textbf{Output:} & & Map of terms (genes), latent semantic classes.\\
   \hline
& 1. & Obtain the asymmetric similarity $\textbf{S}$.\\
& 2.& Decompose $\textbf{S}= U_s \Sigma_s V_s^T$.\\
& 3.& Obtain the two sources of asymmetry $\textbf{K}_1$ and $\textbf{K}_2$.\\
& 4.& Obtain the matrix of labels of the terms (or genes) $\textbf{W}$.\\
& 5.& Fuse the matrices using the scheme proposed in (\ref{eq:makm}).\\
& 6.& Obtain the projections of the terms into the latent semantic classes.\\
& 7.& Assign probabilities to the classes using a mixture model.\\
& 8.& Visualize the genes and the mixture model using MDS.\\
   \hline
\end{tabular}\caption{Main steps of aLSI algorithm.}\label{al1}
\end{table}

\section{Application: aLSI of the Human cancer data set}\label{sec4}

In this section we analyse the Human cancer data set, described in the introduction of this work, by using the proposed asymmetric latent semantic indexing detailed in Section \ref{sec:plsi}. The analysis consists of two main steps. First, we calculate the genes which are statistically expressed in each experiment and we obtain the matrix $\textbf{X}$. Second, we use this matrix to obtain genetic semantic classes of genes that we will associate with different types of cancer. In order to find the clusters of genes, we also use the Euclidean distance and the Correlation matrix to illustrate the benefits of our approach in this context. The R-code to replicate all the figures and results of this work is available at https://github.com/javiergonzalezh/aLSI.

%In general, the underlying assumption in genetic analysis is that co-expressed genes or tissues with similar profiles may share common functional tasks in the cell \citep{Alon:1999,eisen-cluster}. Cluster and visualization statistical techniques are the key to properly identify such groups and to provide valuable and interpretable biological information \citep{Nikkil2002953,Theocharidis2009,eisen-cluster}.

\subsection{Differential analysis}\label{sec41}

The initial point in our analysis is the matrix $\textbf{Y}$, which consists of the expression level of 6830 genes in 64 experiments. The first step is to identify which genes are differentially expressed. That is, to statistically decide  whether for a given gene its expression is greater than what we would expect just due to natural random variations. 

\begin{figure}[t!]
 \begin{center}
\includegraphics[height=8cm]{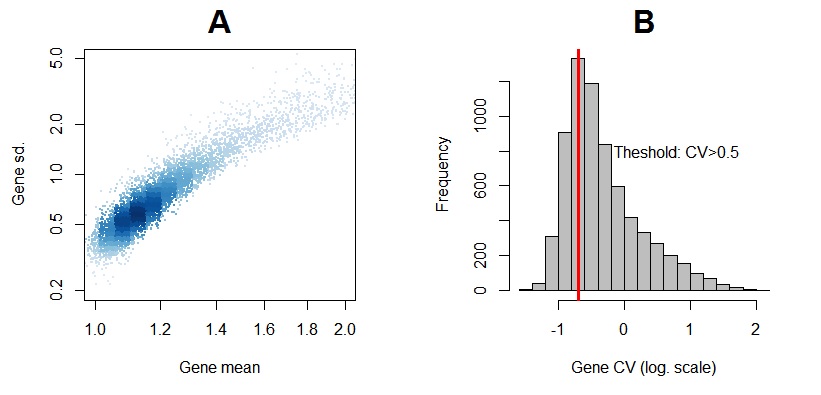}
\caption{The first step towards the identification of the latent genetic classes of the database is to perform a differential analysis of the genes. A) Mean vs. Standard deviation of the  6830 genes of the Human cancer data set across the 64 available patients. B) Histogram of the CV of all the genes.}\label{figure:seriesyoga}
  \end{center}\label{figure.cv}
\end{figure}
The motivation for this gene “filtering” is that a relatively few number of genes of the database should be expressed in each experiment. Different methods have been proposed in the literature \citep{citeulike:12453949}. In this work, we follow a simple and straightforward approach which uses the coefficient of variation $CV=|\bar{x}|/sd(x)$  to discriminate between expressed and non-expressed genes. The reason to use this coefficient is the linear relationship between the gene expression mean and the gene standard deviation expression of the genes across the experiments. See Figure \ref{figure.cv} A. In particular, we consider that a gene is differentially expressed in the database if the value of the coefficient of variation is larger than 0.5. Of course, other thresholds are possible if additional information about the experiential noise is available. In Figure \ref{figure.cv}.A, we show the histogram of the coefficients of variation of all the genes of the database. The total number of genes  with a CV larger than 0.5 is 2093.

Given the set of expressed genes, in order to build the matrix $\textbf{X}$, we need to decide when a particular gene is expressed in an experiment. To this end, we consider the maximum of the expression in the set of non expressed genes and we use it as a threshold in the set of the expressed ones. The purpose of this threshold is to capture the random variation in the data. Figure \ref{figure.exp} shows the expression values of two genes across the 64 experiments. One of the genes (left) is differentially expressed in those experiments above the selected threshold (horizontal dotted line at 4.46). In particular, this gene is assumed to be significantly expressed in a total of 5 experiments. On the other hand, in Figure \ref{figure.exp} (right), we show the expression values of a non expressed gene. All the values remain below the threshold, reflecting that the variations in expression are random variations. In Figure \ref{figure:heatmap}.B, we show the heat map of the 2093 differentially expressed genes of the dataset.

\begin{figure}[t!]
 \begin{center}
\includegraphics[height=8cm,]{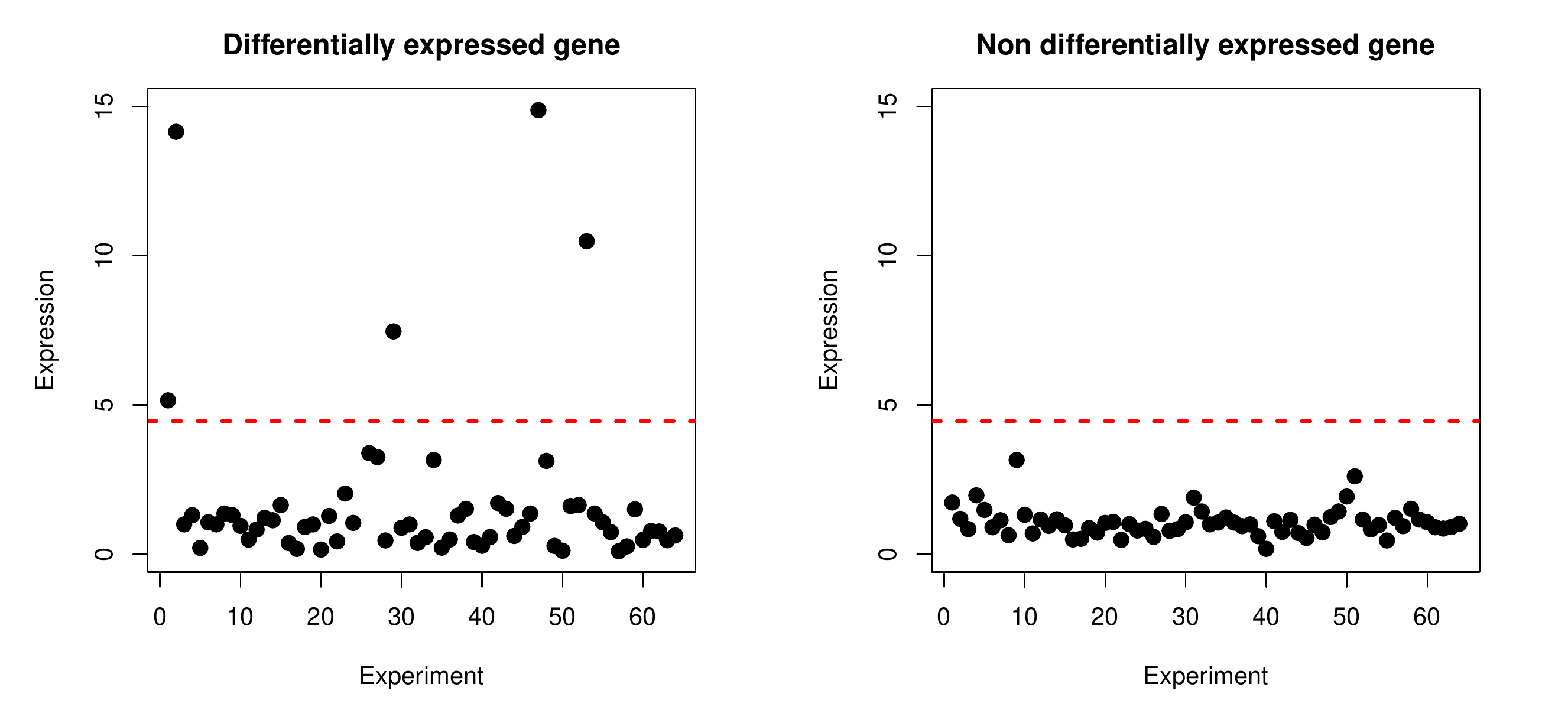}
\caption{Illustration of the profiles of two genes. On the left we show a differentially expressed genes in 5 experiments. On the right we show the profile of a non differentially expressed gene.}\label{figure.exp}
  \end{center}
\end{figure}

\subsection{Extraction of latent genetic classes using aLSI}

Next, we apply the asymmetric latent semantic indexing proposed in Section \ref{sec:plsi} to the differentially expressed genes of the Human Cancer dataset. To this end, we calculate the gene similarity following (\ref{eq:similarity}) and we proceed with the steps of Algorithm 1. 

% latex table generated in R 3.0.2 by xtable 1.7-1 package
% Thu Feb 20 11:25:03 2014
\begin{table}[ht]
\centering
\begin{tabular}{cccccc}
  \hline
Latent genetic class & Gene 1 & Gene 2 & Gene 3 & Gene 4 & Gene 5 \\ 
  \hline
 1 &   7 &   8 & 619 & 683 & 1726 \\ 
  2 & 1891 & 193 & 187 & 188 & 186 \\ 
  3 & 1721 & 1720 & 1684 & 1620 & 1653 \\ 
  4 &  19 &  59 &  63 &  76 & 102 \\ 
  5 & 1339 & 1619 & 2040 & 1596 & 1470 \\ 
  6 & 1359 & 574 & 1451 & 1729 & 2007 \\ 
  7 & 130 & 156 & 157 & 158 & 249 \\ 
  8 & 496 & 502 & 515 & 493 & 494 \\ 
  9 & 242 & 338 & 369 & 377 & 380 \\ 
  10 & 253 & 277 & 278 & 279 & 281 \\ 
  11 & 996 & 1000 & 1045 & 992 & 1007 \\ 
  12 & 449 & 451 & 475 & 485 & 486 \\ 
  13 & 480 & 576 & 577 & 578 & 1168 \\ 
  14 & 828 & 883 & 884 & 893 & 913 \\ 
   \hline
\end{tabular}\caption{5 genes IDs with maximum probability in the mixture-model for each one of the 14 latent genetic clusters. The label of each gene is given by the row position in the  dataset of differentially expressed genes.}\label{table.10genes}
\end{table}

The matrix $\textbf{W}$ in expression (\ref{eq:makm}) is calculated using the labels of the experiments. First we assign a membership of the genes to each one of the 14 types of cancer:``CNS", ``RENAL'', ``BREAST'', ``NSCLC'', ``UNKNOWN'', ``OVARIAN'',  ``MELANOMA'', ``PROSTATE'', ``LEUKEMIA'', ``K562B-repro'', ``K562A-repro", ``COLON", ``MCF7A-repro'', and ``MCF7D-repro". To this end, we assign the gene $i$ to the type of cancer $k$ if it is expressed in at least in one of the experiments of that type. Note that the same gene might belong to more than one class simultaneously.  We define the gene similarity matrix $\textbf{Q}$ whose entries are calculated as

\begin{equation}
\textbf{q}_{ij} = \frac{\# \mbox{times gene i and j appear simultaneously in some type of cancer}}{\# \mbox{types of cancer in which gene i is expressed}}
\end{equation}
The matrix $\textbf{W}$ in  (\ref{eq:makm}) is calculated as $\textbf{W} = (\textbf{Q}_1+\textbf{Q}_2)/2$ where $\textbf{Q}_1$ and $\textbf{Q}_2$ are the matrices resulting from the polar decomposition of $\textbf{Q}$. Note that the matrix $\textbf{W}$ play the role of the labels in the combination, following the idea of kernel combinations in the support vector classification context \citep{bib:diego10}. Parameter $\tau$ is fixed to $0.2$ following \citep{RePEc:eee:jmvana:v:120:y:2013:i:c:p:120-134}.  

We apply the aLSI described in Section \ref{sec:plsi}. We use a metric Multidimensional scaling to obtain a low dimensional representation of the genes, which is shown in Figure \ref{figure:rep}. Also, the projections using the Pearson correlation and the Euclidean distance are shown. The Euclidean distance and the Pearson correlation do not show any cluster structure helpful to identify groups of genes involved in different cancers. However, the proposed aLSI  is able to do so. 

In order to interpret such groups we estimate the mixture model described in Section \ref{sec:plsi} with 14 groups. Each gene is assigned to a cluster by taking

$$\mbox{class of } gene_i =  \arg \max_{c_i} p(c_i| gene_i). $$
The conditional probabilities $p(c_i| gene_i)$  can be interpreted in this context as fuzzy membership degrees. In Table \ref{table.10genes} we show the 10 genes with the highest probability of each cluster. In Table \ref{table:groups}, we show the cross frequencies of the genes in the different types of cancers and clusters. Note that the same gene might belong to different cancer groups simultaneously, therefore the correspondence clusters-cancer types should not be necessarily one to one. 

Some interesting conclusions show up when the Table \ref{table:groups} is interpreted. BREAST, COLON, MELANONA, NSCLS and RENAL cancers seem to be associated to single clusters. The cancers K562A-repro and K562B-repro appear clearly together in the same group (group 9), which also occurs with cancers  MCF7A-repro and MCF7D-repro. 
Apart from the interpretability of the groups in terms of types of cancers, Table \ref{table:groups} also helps to identify similarities between types of cancer. Similar patterns between cancers across the clusters (similar rows) can be associated to similar types of cancer. The previously mentioned case of the K562A-repro and K562B-repro types is a clear example. A graphic illustration of these results can be observed in Figure \ref{cancer.map}, which shows a Sammon mapping of the 14 latent genetic classes (types of cancer) using the results from Table \ref{table:groups}.

\begin{figure}[t!]
 \begin{center}
\includegraphics[height=16cm,]{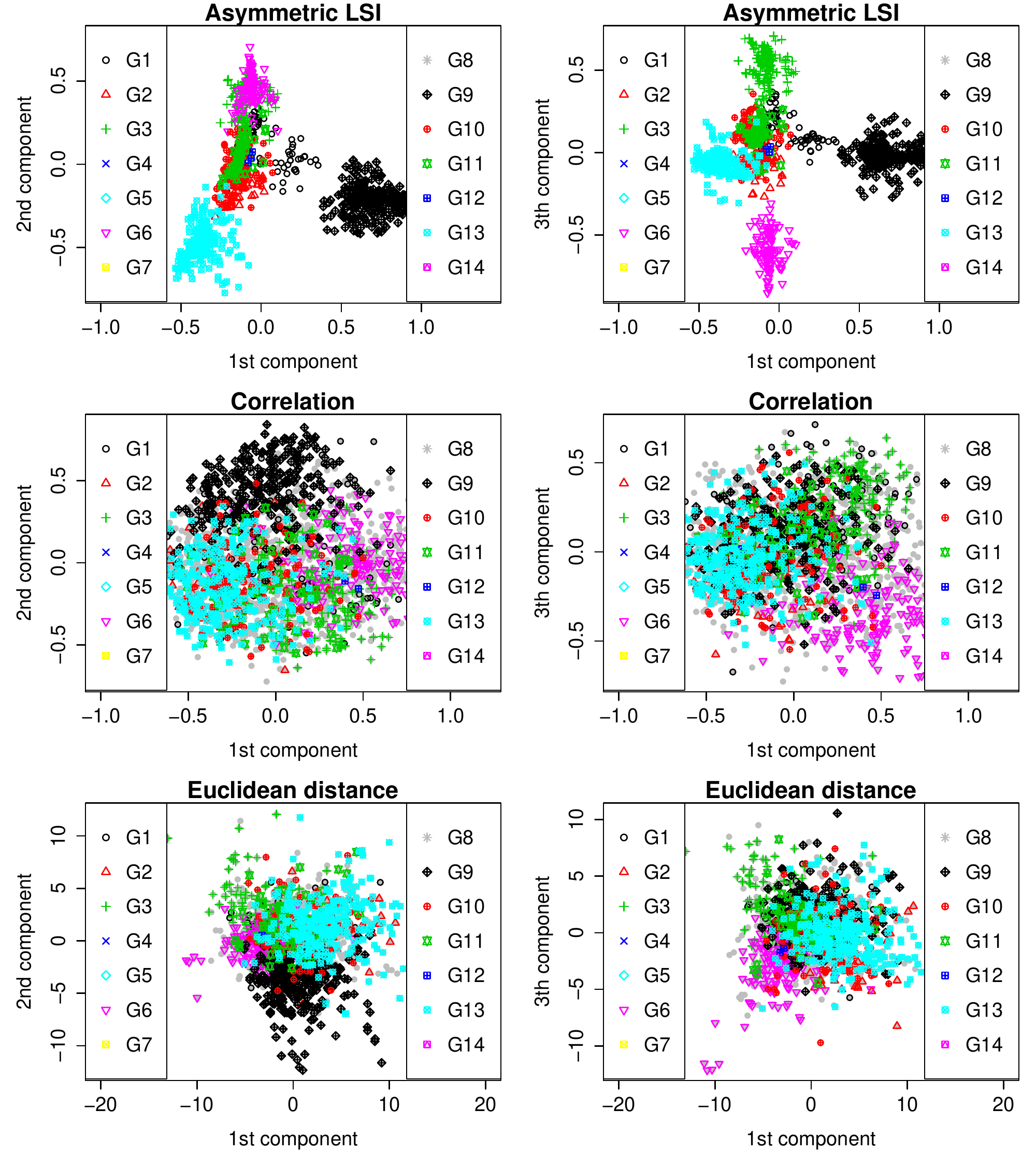}
\caption{ 
Multidimensional scaling projections (1st, 2nd, 3th) using the similarity matrix produced by the aLSI, the Pearson correlation and the Euclidean distance. The groups colouring correspond to the membership of the genes to the different groups of cancer: G1 (BREAST), G2 (CNS), G3 (COLON), G4 (K562A-repro), G5 (K562B-repro), G6 (LEUKEMIA), G7 (MCF7A-repro), G8 (MCF7D-repro), G9 (MELANOMA), G10 (NSCLC), G11 (OVARIAN), G2 (PROSTATE), G13 (RENAL), G14 (UNKNOWN). }\label{figure:rep}
  \end{center}
\end{figure}

% latex table generated in R 3.0.1 by xtable 1.7-1 package
% Sun Aug 25 10:58:12 2013
\begin{table}[t!]
\centering
\tabcolsep=0.11cm 
\begin{tabular}{lcccccccccccccc}
  \hline
 & C1 & C2 & C3 & C4 & C5 & C6 & C7 & C8 & C9 & C10 & C11 & C12 & C13 & C14 \\ 
  \hline
BREAST &   0 &   1 &   0 &  30 &   0 & 301 &  24 &   1 &  11 &  41 &   8 &   2 &  14 &  29 \\ 
  CNS &   5 &  30 &  38 &  11 &  57 &  67 &   0 &  11 &   7 &   0 &  12 &   2 &   2 &   1 \\ 
  COLON &   0 &  15 &   5 & 287 &   0 &  17 &   0 &   2 &   5 &   6 &   0 &   4 &   7 &   0 \\ 
  K562A-repro &   1 &   1 &   0 &   0 &   0 &   0 &   0 &   0 &  62 &   0 &   0 &   2 &   0 &   0 \\ 
  K562B-repro &   0 &   2 &   0 &   1 &   0 &   0 &   0 &   0 &  64 &   0 &   0 &   0 &   1 &   0 \\ 
  LEUKEMIA &   0 &  21 &  13 &  33 &   0 &  48 &   1 & 147 &  62 &   4 &   1 &  56 &   2 &   0 \\ 
  MCF7A-repro &   0 &   0 &   0 &   1 &   0 &   1 &   0 &   0 &   1 &  45 &   0 &   0 &   0 &   0 \\ 
  MCF7D-repro &   0 &   2 &   0 &   1 &   0 &   5 &   0 &   0 &   0 &  36 &   0 &   0 &   1 &   0 \\ 
  MELANOMA &   1 &  24 &  39 &  21 &   0 &  64 &   0 &   8 &   6 &   5 & 307 &   0 &   8 &   1 \\ 
  NSCLC &   0 & 259 &   5 &  25 &   0 &  80 &   0 &   1 &   3 &   3 &   0 &   1 &  43 &   0 \\ 
  OVARIAN &  86 &  36 &  29 &  42 &   2 &  36 &   0 &  12 &   7 &   3 &  12 &   1 &   5 &   0 \\ 
  PROSTATE &   4 &  11 &   4 &   3 &   0 &  11 &   0 &   0 &   5 &   0 &   2 &   0 &   1 &   0 \\ 
  RENAL &   0 &  47 & 377 &  17 &   0 &  87 &   0 &   6 &   6 &   1 &   0 &   0 &   1 &   0 \\ 
  UNKNOWN &   5 &   7 &   2 &   2 &   0 &   5 &   0 &   0 &   0 &   0 &   0 &   0 &   0 &   0 \\ 
   \hline
\end{tabular}\setlength{\tabcolsep}{1em}
\caption{Correspondence between the 14 latent genetic estimated clusters and the genes membership to the different types of cancers.}\label{table:groups}
\end{table}

\begin{figure}[t!]
 \begin{center}
\includegraphics[height=10cm,]{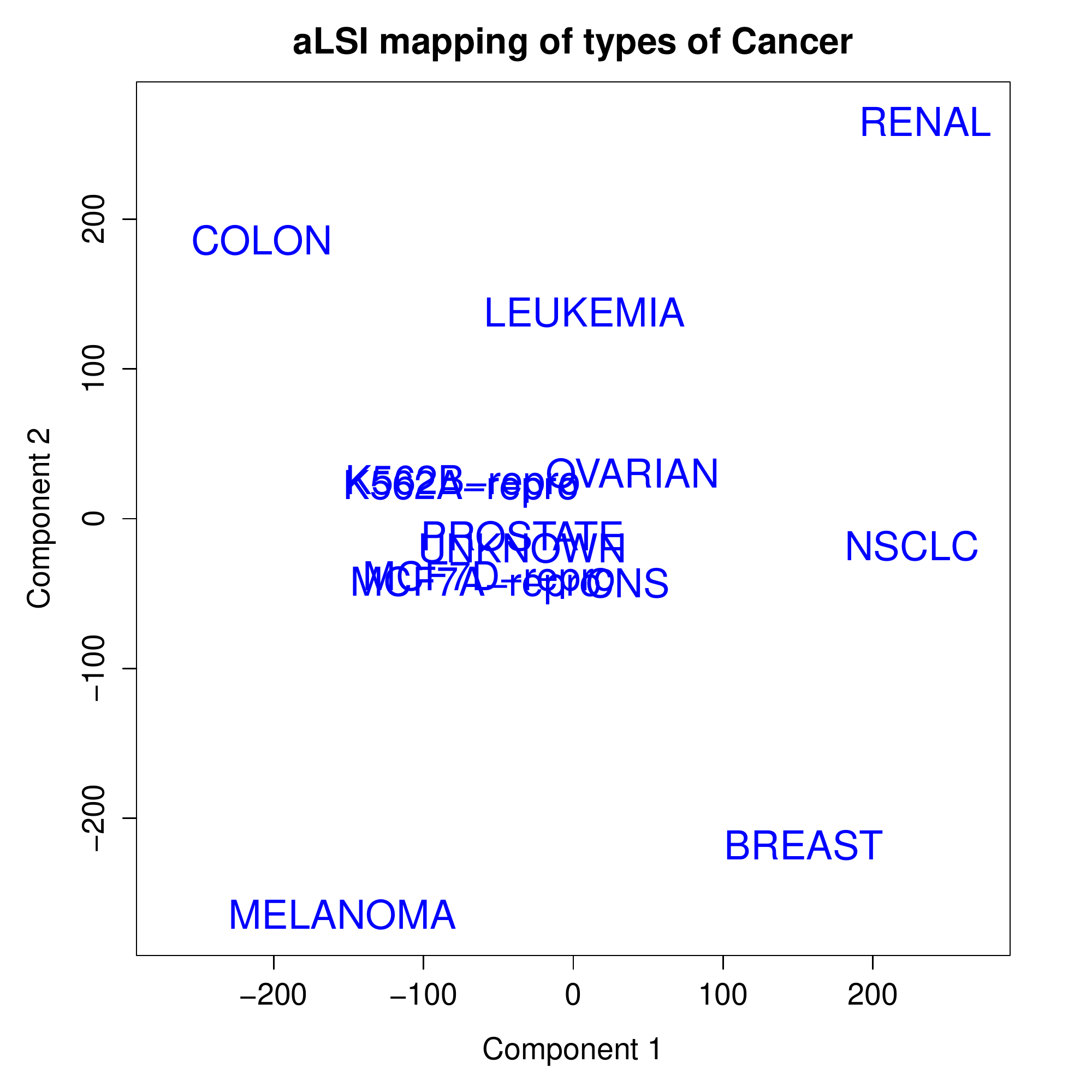}
\caption{ Sammon mapping of the 14 types of cancers using the results from Table \ref{table:groups} }\label{cancer.map}
  \end{center}
\end{figure}

\section{Conclusions}\label{sec5}

In this paper we have proposed a new approach to visualize gene expression experiments. The key idea is to use an asymmetric similarity for the genes, which is used within the latent semantic indexing context, to obtain latent genetic classes or groups of genes which are similar in their expression patterns. We provide both, a Euclidean representation of the genes, which is able to illustrate the different genetic patterns of expression in the data set, and the probabilities of membership of each gene to those classes. The proposed method has been used to analyse the Human Cancer dataset obtaining new and valuable information that remains unadvertised to classical similarity measures like the Pearson's correlation and the Euclidean distance.

This work leads to a wide variety of future analysis. On the most theoretical and methodological side, the study of the geometrical properties of the matrices $\textbf{K}_1$ and $\textbf{K}_2$ and of further combination procedures are of interest. For instance, we aim to explore the Geometric and Harmonic weighted means given by  

$$\mathcal{F}^t_{geometric}(\textbf{K}_1, \textbf{K}_2) =  \textbf{K}_1^{1/2}(\textbf{K}_1^{-1/2}\textbf{K}_2\textbf{K}_1^{-1/2})^{t}\textbf{K}_1^{1/2},$$
$$\mathcal{F}^t_{harmonic}(\textbf{K}_1, \textbf{K}_2) = (t\textbf{K}_1^{-1}+ (1-t) \textbf{K}_2^{-1})^{-1},$$
for $t \in [0,1]$ and to study their effects in the final genes representation.

 In addition, although we have presented a method in which the sources of asymmetry for the genes similarity are merged into a symmetric matrix, it is our plan to investigate the potential combinations of our approach with previously developed asymmetric multidimensional scaling techniques \citep{Chino2012}. Also new ways to embed prior knowledge into the matrix $\textbf{W}$ will be the focus of further study, which we envision will have a large impact for practitioners: in this work we only have considered the experiments labelling to obtain a measure of association for the genes. However, in the future it is our aim to consider gene ontologies and other topological measures of biological networks, like Protein-Protein interaction networks to improve the final gene mapping and the interpretation of the obtained gene semantic classes.

\appendix
\section{Appendix}

\begin{proof}
(Proposition 1). To maximize $\textsf{G}_{\tau}(\textbf{K})$ we take partial the derivative for each $\textbf{S}_{ls}$. Then
\begin{equation}
\frac{\partial \textsf{G}_{\tau}[\textbf{K}]}{\partial (\textbf{K})_{sl}} =  2(\textbf{K}_{ls} - \gamma_1\mathcal{F}(\textbf{K}_1,\textbf{K}_2)_{ls}) +2\tau \left((\textbf{K})_{ls} -  \gamma_2(\textbf{W})_{ls}  \right) 
\end{equation}
for $s,l=1,\dots,m,$. Setting the previous partial derivatives to zero yields a linear system whose unique solution is a matrix $\textbf{K}$ whose elements are given by

\begin{equation}\label{eq:proof}
\textbf{K}^*=\gamma_1\frac{1}{\tau+1} \mathcal{F}(\textbf{K}_1,\textbf{K}_2) +\gamma_2\frac{\tau}{\tau+1}  \textbf{W} = \mathcal{F}(\textbf{K}_1,\textbf{K}_2) +\tau \textbf{W},  
\end{equation}
for $l,s=1,\dots,m,$. To check if $\textbf{K}$ is a maximum or a minimum we evaluate the Hessian matrix of $\textsf{G}_{\tau}[\textbf{S}]$ on $\textbf{K}$. Such
matrix is the $n \times n$ diagonal matrix

\begin{equation}
H(\textbf{K}^*) = 2 \cdot\left(
  \begin{array}{cccc}
   \tau+1 & 0 & \cdots & 0 \\
    0 & \tau+1 & \cdots & 0 \\
    \vdots & \ddots & \vdots & \vdots \\
    0 & 0 & \cdots & \tau+1 \\
\end{array} \right)
\end{equation}
which is positive definite for any $\tau>0$. Hence, (\ref{eq:proof}) is a minimum of  (\ref{eq:optSpenalized}) for any $\tau>0$.
\end{proof}

\begin{proposition}
 let $k_1$ and $k_2$ two positive semi-definite kernel functions and let $\phi_1$ and $\phi_2$ their underlying feature maps. Then $k=\lambda_1k_1+\lambda_2k_2$, with $\lambda_1,\,\lambda_2\geq 0$, is a positive semi-definite kernel with $\phi = [\sqrt{\lambda_1}\phi_1,\sqrt{\lambda_2}\phi_2]$ as a valid feature map.
\end{proposition}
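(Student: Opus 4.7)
The plan is to verify the feature-map identity by direct computation and then obtain positive semi-definiteness either as an immediate corollary or via a parallel Gram-matrix argument. Since the statement packages two claims (that $k$ is a PSD kernel, and that the concatenated map $\phi$ is a valid feature map for it), I would address them together: proving that $\phi$ reproduces $k$ automatically exhibits $k$ as an inner product in a Hilbert space, which is the cleanest route to the PSD conclusion.

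First I would fix the ambient Hilbert space. If $\phi_i: T \to \mathcal{H}_i$ with $k_i(t,t') = \langle \phi_i(t), \phi_i(t')\rangle_{\mathcal{H}_i}$, I would work on the direct sum $\mathcal{H} = \mathcal{H}_1 \oplus \mathcal{H}_2$ endowed with its canonical inner product $\langle (u_1,u_2), (v_1,v_2)\rangle_{\mathcal{H}} = \langle u_1,v_1\rangle_{\mathcal{H}_1} + \langle u_2,v_2\rangle_{\mathcal{H}_2}$. The map $\phi(t) = (\sqrt{\lambda_1}\phi_1(t), \sqrt{\lambda_2}\phi_2(t))$ then lives in $\mathcal{H}$, and a one-line computation gives
\begin{equation*}
\langle \phi(t), \phi(t')\rangle_{\mathcal{H}} = \lambda_1 \langle \phi_1(t),\phi_1(t')\rangle_{\mathcal{H}_1} + \lambda_2\langle \phi_2(t),\phi_2(t')\rangle_{\mathcal{H}_2} = \lambda_1 k_1(t,t') + \lambda_2 k_2(t,t') = k(t,t'),
\end{equation*}
which is precisely the required reproduction property, so $\phi$ is a valid feature map for $k$.

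From here the PSD property follows for free: any function of the form $t,t'\mapsto \langle \phi(t),\phi(t')\rangle$ is symmetric and positive semi-definite, since for any finite sample $t_1,\dots,t_n$ and scalars $c_1,\dots,c_n$ one has $\sum_{i,j} c_i c_j k(t_i,t_j) = \bigl\|\sum_i c_i \phi(t_i)\bigr\|^2_{\mathcal{H}} \geq 0$. If a self-contained argument is preferred, I would alternatively note that the Gram matrices $G_1$ and $G_2$ associated with $k_1$ and $k_2$ are PSD by hypothesis, and $G = \lambda_1 G_1 + \lambda_2 G_2$ is then a non-negative linear combination of PSD matrices, hence PSD; combined with the obvious symmetry of $k$, this closes the argument.

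There is no real obstacle here: the result is a direct consequence of the definition of direct-sum Hilbert spaces together with non-negativity of the coefficients $\lambda_1,\lambda_2$. The only point requiring a brief comment is that $\sqrt{\lambda_i}$ is well-defined precisely because $\lambda_i \geq 0$; were some $\lambda_i$ allowed to be negative, the feature-map construction would fail and PSD could be lost, which is why the hypothesis $\lambda_1,\lambda_2\geq 0$ is stated explicitly.
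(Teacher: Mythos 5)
Your proof is correct and follows essentially the same route as the paper's: a direct computation showing that the concatenated map reproduces $k=\lambda_1 k_1+\lambda_2 k_2$ as an inner product. You are in fact slightly more complete than the paper, since you make the ambient direct-sum Hilbert space explicit and spell out why the reproduction property implies positive semi-definiteness (via $\sum_{i,j}c_ic_j k(t_i,t_j)=\bigl\|\sum_i c_i\phi(t_i)\bigr\|^2\geq 0$), a step the paper leaves implicit.
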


\begin{proof}
(Proposition 2). We only need to show that  $k(\textbf{t},\textbf{t}') = \langle \phi(\textbf{t}),\phi(\textbf{t}') \rangle$ is satisfied for $k$ and $\phi$. In our case we have that 

\begin{eqnarray} \nonumber
\langle \phi(\textbf{t}),\phi(\textbf{t}') \rangle & =& \langle (\sqrt{\lambda_1}\textbf{t}_1,\sqrt{\lambda_2}\textbf{t}_2) ,(\sqrt{\lambda_1}\textbf{t}'_1,\sqrt{\lambda_2}\textbf{t}'_2)  \rangle \\ \nonumber
& = &  \lambda_1\langle \phi_1(\textbf{t}),\phi_1(\textbf{t}') \rangle    +  \lambda_2 \langle \phi_2(\textbf{t}),\phi_2(\textbf{t}') \rangle  \\ \nonumber
& = & \lambda k_1(\textbf{t},\textbf{t}') + \lambda_2 k_2(\textbf{t},\textbf{t}') \\ \nonumber
 & = & k(\textbf{t},\textbf{t}'), \\ \nonumber
\end{eqnarray}
which shows that the proposition holds.
\end{proof}

%\begin{figure}[t!]
% \begin{center}
%\includegraphics[height=4.8cm,]{}
%\caption{}\label{figure:seriesyoga}

%  \end{center}
%\end{figure}

\textbf{Acknowledgments}
We thank the support of the Spanish Grant Nos. MEC-2007/04438/00 and DGULM-2008/00059/00. We also thank Georges E. Janssens for hisl elpfull comments on the manuscript.

\bibliographystyle{apalike}
\bibliography{javbib}

\end{document}